\newtheorem{theorem}{Theorem}[section]
\newtheorem{proposition}[theorem]{Proposition}
\newtheorem{corollary}[theorem]{Corollary}
\newtheorem{lemma}[theorem]{Lemma}
\newtheorem{claim}[theorem]{Claim}
\theoremstyle{definition}
\newtheorem{definition}[theorem]{Definition}
\newcommand{\ria}{\rightarrow}
\newcommand{\NN}{\mathbb{N}}
\newcommand{\QQ}{\mathbb{Q}}
\newcommand{\FG}{\mathfrak{G}}
\newcommand{\lra}{\leftrightarrow}
\newcommand{\Lra}{\Leftrightarrow}
\newcommand{\denote}[1]{\llbracket#1\rrbracket}
\newcommand{\angled}[1]{\langle#1\rangle}
\newcommand{\sub}{\subseteq}
\newcommand{\bs}[1]{\boldsymbol{#1}}
\newcommand{\algproblem}[3]{\begin{tabular}{l l}
    \multicolumn{2}{l}{{{\sc #1}}}\\
    \hline
    
    Input:&\parbox{9.5cm}{#2}\\[0.2cm]
    
    Output:&\parbox{9.5cm}{#3}\\
  \end{tabular}}
\newbox\gnBoxA
\newdimen\gnCornerHgt
\newdimen\gnArgHgt
\def\Godelnum #1{%
\setbox\gnBoxA=\hbox{$#1$}%
\gnArgHgt=\ht\gnBoxA%
\ifnum     \gnArgHgt<\gnCornerHgt \gnArgHgt=0pt%
\else \advance \gnArgHgt by -\gnCornerHgt%
\fi \raise\gnArgHgt\hbox{$\ulcorner$} \box\gnBoxA %
\raise\gnArgHgt\hbox{$\urcorner$}}
\newcommand\Zero{\mathit{Zero}}
\newcommand\pZero{\mathit{pZero}}
\newcommand\sZero{\mathit{sZero}}
\newcommand\nZero{\mathit{nZero}}
\newcommand\One{\mathit{One}}
\newcommand\pOne{\mathit{pOne}}
\newcommand\sOne{\mathit{sOne}}
\newcommand\nOne{\mathit{nOne}}
\newcommand\Time{\mathit{Time}}
\newcommand\State{\mathit{State}}
\newcommand\pState{\mathit{pState}}
\newcommand\sState{\mathit{sState}}
\newcommand\nState{\mathit{nState}}
\newcommand\Tape{\mathit{Tape}}
\newcommand\Negative{\mathit{Negative}}
\newcommand\Accept{\mathit{Accept}}
\newcommand\true{\mathit{true}}
\newcommand\false{\mathit{false}}
\title{{\sc DValue} for Boolean games is EXP-complete}  
\author{Egor Ianovski}
\begin{document}

  \maketitle
  
  \begin{abstract}
    We show that the following problem is EXP-complete: given a rational $v$
    and a two player, zero-sum Boolean game $G$ determine whether the value of $G$
    is at least $v$. The proof is via a translation of the proof of the same
    result for Boolean circuit games in \cite{Feigenbaum1995}.
  \end{abstract}
  
  \section{Preliminaries}
  
    We will be using the encoding of \cite{Ianovski2O13} to replicate the proof of \cite{Feigenbaum1995}.
    A familiarity with \cite{Ianovski2O13} will make the proof much easier to follow.
    
    The specific breed of Boolean games that concerns us here has two players, and since their
    goals are purely conflicting only one goal formula is necessary.
  
    \begin{definition}
      A two player, zero-sum Boolean game consists of a set of variables, $\Phi$,
      partitioned into two sets, $\Phi_1$ and $\Phi_2$, as well as a propositional logic formula
      $\gamma_1$ over $\Phi$.
      
      The game is played by letting Player One choose a truth assignment to the variables in $\Phi_1$
      and Player Two to the variables in $\Phi_2$. If the resulting truth assignment satisfies $\gamma_1$,
      Player One wins the game. If it does not, Player Two wins the game.
    \end{definition}
    
    The algorithmic question of interest is a decision version of {\sc Value}.
    
    \algproblem{DValue}{\vspace{0.1cm} A two player zero-sum Boolean game $G$ and a rational $v$.}
  {\vspace{0.2cm} YES if the value of $G$ is at least $v$, NO otherwise.}

    \subsection{Talking about numbers}
    
      As we will be dealing with sequences of propositional variables a lot, we will use the notation
      $\overline{p_i}$ to mean $p_1,\dots,p_n$. The length of the sequence, $n$, will be clear from context.
    
      Sequences have order, which is essential to our approach of representing numbers via truth assignment.
      This is done in the standard way - the truth assignment to $\overline{p_i}$ that sets $p_i$ to $\true$
      is treated as the binary integer with the $i$th most significant bit set to 1. This leads to the following
      definition:
      
      \begin{definition}
      Let $\overline{p_i}$ be a sequence of $n$ propositional variables,
      and $\nu$ some truth assignment to $\overline{p_i}$.
      We use $\denote{\overline{p_i}}$ to denote the numeric value associated
      with $\nu$ via its assignment to $\overline{p_i}$.
    \end{definition}
    
     \begin{definition}
      Let $j$ be a binary integer in $[0,2^n-1]_\NN$. Let $j[i]=p_i$ if the $i$th
      bit of $j$ is 1 and $j[i]=\neg p_i$ otherwise. We use $\bs{\Godelnum{j}}$ to denote
      the formula asserting that $\denote{\overline{p_i}}=j$:
      $$\bs{\Godelnum{j}}=\bigwedge_{1\leq i\leq n}j[i].$$
      Note also that the size of $\bs{\Godelnum{j}}$ is linear in $|j|$.
      
      In the case where we wish to be precise as to which sequence $\bs{\Godelnum{j}}$
      is defined over, we use $\bs{\Godelnum{j}}({\overline{p_i}})$.
    \end{definition}
    
    We introduce a few formulae shorthands. These formulae take sequences of
    of variables as arguments. We assume throughout that every sequence passed to
    a formula is of the same length.
    \begin{lemma}[\cite{Ianovski2O13}]
      Let $\bs{Succ}(\overline{p_i},\overline{q_i})$ denote a term that is
      true if and only if $\denote{\overline{p_i}}+1=\denote{\overline{q_i}}$.
      
      $\bs{Succ}(\overline{p_i},\overline{q_i})$ can be replaced by a formula
      polynomial in $|\overline{p_i}|+|\overline{q_i}|$.
    \end{lemma}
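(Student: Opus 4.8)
The plan is to translate the arithmetic claim $\denote{\overline{p_i}}+1=\denote{\overline{q_i}}$ into the standard ripple-carry description of incrementing a binary numeral, and then to observe that every carry bit admits an explicit closed form as a conjunction over less significant bits. This avoids introducing auxiliary variables altogether and yields a formula of size $O(n^2)$, which is polynomial in $|\overline{p_i}|+|\overline{q_i}|$ as required.

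First I would fix the bit-ordering convention: since $p_1$ is the most significant bit, position $n$ is the least significant, so adding one injects a carry into position $n$. Writing $c_i$ for the carry \emph{into} position $i$, the increment is governed by $c_n=\true$, $c_i=p_{i+1}\wedge c_{i+1}$, and $q_i\leftrightarrow(p_i\oplus c_i)$. Unrolling the carry recurrence gives the closed form $c_i=\bigwedge_{j=i+1}^{n}p_j$; that is, the carry reaches position $i$ exactly when every bit of $\overline{p_i}$ below position $i$ is set. This is the key step, because it lets each $c_i$ be written as an explicit conjunction rather than a fresh propositional symbol. I would then set
$$\bs{Succ}(\overline{p_i},\overline{q_i}):=\Big(\neg\bigwedge_{j=1}^{n}p_j\Big)\wedge\bigwedge_{i=1}^{n}\Big(q_i\leftrightarrow\big(p_i\oplus\textstyle\bigwedge_{j=i+1}^{n}p_j\big)\Big),$$
reading the empty conjunction at $i=n$ as $\true$ and expanding $p_i\oplus c_i$ as $(p_i\wedge\neg c_i)\vee(\neg p_i\wedge c_i)$. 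Correctness is then a routine two-way check: if $\denote{\overline{p_i}}+1=\denote{\overline{q_i}}$ then $\overline{p_i}$ is not all-ones and the standard carry identity makes every conjunct hold; conversely, the conjuncts force $\overline{q_i}$ to be the bitwise result of the increment while the first clause excludes the wrap-around.

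The only genuine obstacle is the boundary behaviour at overflow. Without the clause $\neg\bigwedge_{j}p_j$ the formula would be satisfied by $\denote{\overline{p_i}}=2^n-1$ together with $\denote{\overline{q_i}}=0$, so it would encode successor modulo $2^n$ rather than true integer increment. Since $\denote{\cdot}$ ranges only over $[0,2^n-1]_\NN$, the successor of $2^n-1$ does not exist and $\bs{Succ}$ must be unsatisfiable there; demanding that the carry out of the most significant bit vanish (equivalently, that $\overline{p_i}$ be not all-ones) is exactly what enforces this.

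Finally I would tally the size. The first clause is a single conjunction of $n$ literals; the main body is a conjunction of $n$ equivalences, each of which contains a conjunction of at most $n-1$ bits expanded through a constant-size exclusive-or gadget. Hence the total length is $O(n^2)$, and in particular polynomial in $|\overline{p_i}|+|\overline{q_i}|$, completing the argument. I would remark that a linear-size version is possible by sharing the nested carries $c_i=p_{i+1}\wedge c_{i+1}$, but since only polynomiality is claimed the quadratic expansion suffices.
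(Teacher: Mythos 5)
Your proposal is correct, and in fact it fills a gap the paper deliberately leaves open: this lemma is stated without proof, imported from the cited reference \cite{Ianovski2O13}, so there is no in-paper argument to compare against. Your construction is the standard one such a proof would use: the closed-form carry $c_i=\bigwedge_{j=i+1}^{n}p_j$ (with the empty conjunction at $i=n$ read as $\true$) correctly unrolls the ripple-carry recurrence under the paper's big-endian convention, the conjuncts $q_i\leftrightarrow(p_i\oplus c_i)$ pin $\overline{q_i}$ to the incremented value, and the guard $\neg\bigwedge_{j=1}^{n}p_j$ is exactly the condition that the carry out of the most significant position vanishes, which correctly makes $\bs{Succ}$ unsatisfiable at $\denote{\overline{p_i}}=2^n-1$ rather than encoding successor modulo $2^n$ --- a boundary case the paper itself flags (in a commented-out footnote) with ``$2^n-1$ has no successor.'' The $O(n^2)$ size bound is right and comfortably within the polynomial bound claimed, so the proof stands as a valid, self-contained justification of the lemma.
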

    
    \begin{lemma}
      Let $\bs{Equal}(\overline{p_i},\overline{q_i})$ denote a term that is
      true if and only if $\denote{\overline{p_i}}=\denote{\overline{q_i}}$.
      
      $\bs{Equal}(\overline{p_i},\overline{q_i})$ can be replaced by a formula
      linear in $|\overline{p_i}|+|\overline{q_i}|$.
    \end{lemma}
    \begin{proof}
      Two binary integers are equal if and only if they are bitwise equal.
      This gives us the following:
      $$\bs{Equal}(\overline{p_i},\overline{q_i})=\bigwedge_{1\leq i\leq |\overline{p_i}|}(p_i\lra q_i).$$
    \end{proof}
    
    Of use in the next section is the fact that we can also deal with the less-than
    order.
    
    \begin{lemma}
	Let $\bs{Less}(\overline{p_i},\overline{q_i})$ denote a term that is true if
	and only if $\denote{\overline{p_i}}<\denote{\overline{q_i}}$.
	
	$\bs{Less}(\overline{p_i},\overline{q_i})$ can be replaced
	by a formula of propositional logic polynomial in $|\overline{p_i}|+|\overline{q_i}|$.
    \end{lemma}
    \begin{proof}
	Let $a[i]$ be the $i$th most significant bit of $a$. 
	
	Intuitively, if $\denote{\overline{p_i}}<\denote{\overline{q_i}}$ for two big-endian binary digits then there exists a
	$k$ such that:
	\begin{align*}
	  \denote{\overline{p_i}}[i]&=\denote{\overline{q_i}}[i],\quad i\leq k.\\
	  \denote{\overline{p_i}}[k+1]&=0,\quad \denote{\overline{q_i}}[k+1]=1.
	\end{align*}
	That is, the first bit where the two integers differ is a 1 for $\denote{\overline{q_i}}$
	and a 0 for $\denote{\overline{p_i}}$. This is clearly both necessary and sufficient.
	
	Since there are only $|\overline{p_i}|$ possible values of $k$, this can be replaced by
	a polynomial size formula that looks as follows:
	$$\bigvee_{0\leq k< |\overline{p_i}|} \bs{Succ}(p_1,\dots,p_k,q_1,\dots, q_k).$$
    \end{proof}
    \begin{lemma}
	Let $\bs{LessEq}(\overline{p_i},\overline{q_i})$ denote a term that is true if
	and only if $\denote{\overline{p_i}}\leq\denote{\overline{q_i}}$.
	
	$\bs{LessEq}(\overline{p_i},\overline{q_i})$ can be replaced
	by a formula of propositional logic polynomial in $|\overline{p_i}|+|\overline{q_i}|$.
    \end{lemma}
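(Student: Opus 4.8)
The plan is to obtain $\bs{LessEq}$ for free from the two immediately preceding lemmas rather than to build a new comparator from scratch. Observe that $\denote{\overline{p_i}}\leq\denote{\overline{q_i}}$ holds exactly when either $\denote{\overline{p_i}}<\denote{\overline{q_i}}$ or $\denote{\overline{p_i}}=\denote{\overline{q_i}}$, so I would simply set
$$\bs{LessEq}(\overline{p_i},\overline{q_i})=\bs{Less}(\overline{p_i},\overline{q_i})\vee\bs{Equal}(\overline{p_i},\overline{q_i}).$$
Correctness is immediate from the semantics already established for $\bs{Less}$ and $\bs{Equal}$: a truth assignment satisfies the right-hand side iff it satisfies one of the two disjuncts, which happens iff the corresponding strict-order or equality relation holds between the encoded integers, and these two cases together are precisely the non-strict order.

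For the size bound I would invoke the earlier lemmas directly: $\bs{Less}$ is replaceable by a formula polynomial in $|\overline{p_i}|+|\overline{q_i}|$ and $\bs{Equal}$ by one linear in the same quantity. The disjunction of the two, together with a single extra connective, is therefore still polynomial in $|\overline{p_i}|+|\overline{q_i}|$, which is exactly the claimed bound.

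Since this is a one-line reduction to results already in hand, I do not anticipate any genuine obstacle; the only point requiring care is that both constituent formulae are taken over the same pair of sequences $\overline{p_i},\overline{q_i}$ of equal length, consistent with the standing convention that all sequences passed to a shorthand have matching length. Should one prefer a self-contained construction, an alternative would be to reuse the witness characterisation from the $\bs{Less}$ proof and adjoin the all-bits-equal case as one additional disjunct; but the reduction above is cleaner and already yields the stated polynomial bound.
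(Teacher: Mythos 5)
Your proof is correct and matches the intended argument: the paper states this lemma without proof, evidently because the one-line reduction $\bs{LessEq}(\overline{p_i},\overline{q_i})=\bs{Less}(\overline{p_i},\overline{q_i})\vee\bs{Equal}(\overline{p_i},\overline{q_i})$ is exactly the omitted step. Your size bound and correctness reasoning are both sound, so there is nothing to add.
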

    \begin{lemma}
	Let $\bs{Add}(\overline{p_i},\overline{q_i},\overline{r_i})$ denote a term that is true if
	and only if $\denote{\overline{p_i}}+\denote{\overline{q_i}}=\denote{\overline{r_i}}$.
	
	$\bs{Add}(\overline{p_i},\overline{q_i},\overline{r_i})$ can be replaced
	by a formula of propositional logic polynomial in $|\overline{p_i}|+|\overline{q_i}|+|\overline{r_i}|$.
    \end{lemma}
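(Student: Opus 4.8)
The plan is to realise $\bs{Add}$ by the standard ripple-carry construction, but with the carries written out explicitly as carry-lookahead formulae over $\overline{p_i}$ and $\overline{q_i}$, so that no auxiliary variables are introduced --- matching the style of the $\bs{Equal}$ and $\bs{Less}$ constructions above. Write $n=|\overline{p_i}|$ and recall that the sequences are big-endian, so $p_1$ is the most significant bit and $p_n$ the least significant; carries therefore propagate from larger indices towards smaller ones.

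First I would define, for each position $i$, a formula $\kappa_i$ asserting that a carry enters position $i$ from the less significant positions. A carry reaches position $i$ exactly when some less significant position $j>i$ generates one (both bits set) and every position strictly between them propagates it (at least one bit set):
\[
\kappa_i \;=\; \bigvee_{i<j\le n}\Big( (p_j\wedge q_j)\wedge \bigwedge_{i<k<j}(p_k\vee q_k)\Big).
\]
Each $\kappa_i$ is of size $O(n^2)$, and the analogous $\kappa_0$ (taking $i=0$) expresses the carry out of the most significant bit, i.e.\ overflow. A quick unrolling check confirms that this matches the recursive carry $c_i=(p_{i+1}\wedge q_{i+1})\vee((p_{i+1}\vee q_{i+1})\wedge c_{i+1})$ with $c_n=\false$.

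Second, I would conjoin the per-bit sum constraints with a no-overflow clause. The $i$th output bit must equal the three-way parity of $p_i$, $q_i$ and the incoming carry $\kappa_i$ --- a constant-size gadget --- and the top carry must vanish, since all three sequences have length $n$ and a genuine overflow would leave no $n$-bit value equal to the sum:
\[
\bs{Add}(\overline{p_i},\overline{q_i},\overline{r_i}) \;=\; \neg\kappa_0 \wedge \bigwedge_{1\le i\le n}\big(r_i \lra (p_i\oplus q_i\oplus \kappa_i)\big).
\]
Summing over the $n+1$ carry formulae gives total size $O(n^3)$, which is polynomial in $|\overline{p_i}|+|\overline{q_i}|+|\overline{r_i}|$ as required.

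The remaining work is the correctness argument, which I expect to be routine but the main place to slip: one shows by induction on the positions (from the least significant upward) that $\kappa_i$ holds under a truth assignment precisely when binary addition of $\denote{\overline{p_i}}$ and $\denote{\overline{q_i}}$ produces a carry into position $i$, whence the parity constraints force $\overline{r_i}$ to spell out the sum. The likely obstacles are purely bookkeeping --- keeping the big-endian indexing and the direction of carry propagation straight, and correctly ruling out overflow so that $\bs{Add}$ is false exactly when the true sum exceeds $2^n-1$. If one were willing to introduce fresh carry variables one could instead write a linear-size ripple-carry formula, but since every variable of a Boolean game is controlled by a player, I prefer the lookahead form, which keeps $\bs{Add}$ a formula over $\overline{p_i},\overline{q_i},\overline{r_i}$ alone.
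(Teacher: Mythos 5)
Your construction is correct, but it is not the route the paper takes, and the difference turns out to be substantive. The paper also writes a direct bitwise formula over $\overline{p_i},\overline{q_i},\overline{r_i}$ with no auxiliary variables, but where you compute the carry entering position $i$ by the full lookahead disjunction $\kappa_i=\bigvee_{i<j\le n}\big((p_j\wedge q_j)\wedge\bigwedge_{i<k<j}(p_k\vee q_k)\big)$, the paper approximates it by the single test $p_{i+1}\wedge q_{i+1}$ --- it only tracks carries \emph{generated} at the immediately adjacent, less significant position --- and it guards against overflow with just three explicit patterns ($p_1\wedge q_1$ true, or $p_2\wedge q_2$ together with one of $p_1,q_1$). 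That buys a linear-size formula against your $O(n^3)$, but at the cost of correctness: carries that propagate through a position holding exactly one set bit are lost. Concretely, with $n=3$, $\denote{\overline{p_i}}=3$ (bits $011$) and $\denote{\overline{q_i}}=1$ (bits $001$), the paper's formula sees $p_2\wedge q_2$ false and so demands $r_1=p_1\oplus q_1=0$ at the top bit; it is then satisfied by $\denote{\overline{r_i}}=0$ rather than the true sum $4$, and similarly $3+5=8$ slips past its overflow guard, the propagation chain having length greater than the guard anticipates. Your $\kappa_i$ --- which you rightly check against the recursion $c_i=(p_{i+1}\wedge q_{i+1})\vee((p_{i+1}\vee q_{i+1})\wedge c_{i+1})$, $c_n=\false$ --- handles exactly these chains, and your $\neg\kappa_0$ conjunct rules out overflow in full generality rather than in the three special cases. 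So your proposal is not merely an alternative decomposition: it repairs a genuine flaw in the paper's formula, and the cubic size is still comfortably polynomial, which is all the lemma requires. (Your closing remark is also apt: a linear-size ripple-carry version would need fresh carry variables, which is undesirable here precisely because every variable in the eventual game must be assigned to some player.)
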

    \begin{proof}
      We first have to ensure that $\denote{\overline{p_i}}+\denote{\overline{q_i}}$ is not too big, i.e. less
      than $2^n$. For this it is sufficient to rule out three cases: $p_1$ and $q_1$ being true; $p_1$, $p_2$ and $q_2$ being
      true; $q_1$, $p_2$ and $q_2$ being true.
      
      Having guaranteed this, the bitwise case for $i$ checks if there is a carry bit to account for (whether $p_{i+1}$ and $q_{i+1}$ are true)
      and handles the rest in the natural fashion.
      \begin{align*}
	\bs{Add}(\overline{p_i},\overline{q_i},\overline{r_i})=&\neg(p_1\wedge q_1)\wedge\neg\big( (p_1\vee q_1)\wedge p_2\wedge q_2\big)\\
	&\wedge\bigwedge_{i=1}^{n-1} \Big[\Big((p_{i+1}\wedge q_{i+1})\wedge\big( (p_i\wedge q_i\wedge r_i)\vee(p_i\wedge\neg q_i\wedge \neg r_i)\\
	&\vee(\neg p_i\wedge q_i\wedge \neg r_i)\vee(\neg p_i\wedge\neg q_i\wedge r_i) \big)\Big)\\
	&\vee\Big(\neg(p_{i+1}\wedge q_{i+1})\wedge\big( (p_i\wedge q_i\wedge \neg r_i)\vee(p_i\wedge\neg q_i\wedge r_i)\\
	&\vee(\neg p_i\wedge q_i\wedge r_i)\vee(\neg p_i\wedge\neg q_i\wedge \neg r_i) \big)\Big)\Big]\\
	&\wedge \big((p_n\wedge q_n\wedge \neg r_n)\vee(p_n\wedge\neg q_n\wedge r_n)\\
	&\vee(\neg p_n\wedge q_n\wedge r_n)\vee(\neg p_n\wedge\neg q_n\wedge \neg r_n)\big).
      \end{align*}
    \end{proof}
    
    \begin{lemma}
	Let $\bs{Sub}(\overline{p_i},\overline{q_i},\overline{r_i})$ denote a term that is true if
	and only if $\denote{\overline{p_i}}-\denote{\overline{q_i}}=\denote{\overline{r_i}}$.
	
	$\bs{Sub}(\overline{p_i},\overline{q_i},\overline{r_i})$ can be replaced
	by a formula of propositional logic polynomial in $|\overline{p_i}|+|\overline{q_i}|+|\overline{r_i}|$.
    \end{lemma}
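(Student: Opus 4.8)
The plan is to observe that subtraction is nothing more than addition read backwards, and so to reduce this lemma entirely to $\bs{Add}$, which we have already shown admits a polynomial-size formula. The equation $\denote{\overline{p_i}}-\denote{\overline{q_i}}=\denote{\overline{r_i}}$ holds precisely when $\denote{\overline{q_i}}+\denote{\overline{r_i}}=\denote{\overline{p_i}}$, so I would simply set
$$\bs{Sub}(\overline{p_i},\overline{q_i},\overline{r_i})=\bs{Add}(\overline{q_i},\overline{r_i},\overline{p_i}).$$

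The one point that warrants checking is the overflow guard built into $\bs{Add}$: recall that $\bs{Add}(\overline{a_i},\overline{b_i},\overline{c_i})$ asserts not only that the bitwise sum is correct but also that $\denote{\overline{a_i}}+\denote{\overline{b_i}}<2^n$, since otherwise the sum cannot be represented in $n$ bits. Here the intended sum is $\denote{\overline{q_i}}+\denote{\overline{r_i}}$, which we are forcing to equal $\denote{\overline{p_i}}$. Because $\denote{\overline{p_i}}\in[0,2^n-1]$ by construction, any assignment satisfying the arithmetic is automatically within range, so the guard is never an obstruction and the reduction remains faithful rather than spuriously rejecting legitimate triples.

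It remains only to confirm the boundary behaviour. When $\denote{\overline{p_i}}<\denote{\overline{q_i}}$ the natural-number difference is undefined, there being no $\denote{\overline{r_i}}\geq 0$ that realises it, and correspondingly $\bs{Add}(\overline{q_i},\overline{r_i},\overline{p_i})$ is unsatisfiable for every choice of $\overline{r_i}$, since $\denote{\overline{q_i}}+\denote{\overline{r_i}}\geq\denote{\overline{q_i}}>\denote{\overline{p_i}}$. Hence the two sides accept exactly the same triples. The size bound is immediate, as the formula is a single instance of $\bs{Add}$ on the same three sequences and therefore polynomial in $|\overline{p_i}|+|\overline{q_i}|+|\overline{r_i}|$. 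There is no genuine obstacle here; the only temptation to resist is building a borrow-propagation circuit from scratch, when rearranging the already-established addition identity does all the work.
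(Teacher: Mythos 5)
Your proof is correct, but it is not the route the paper takes: the paper's entire proof of this lemma is ``Same idea as with addition,'' i.e.\ it envisages a second direct bitwise construction, with borrow bits playing the role that carry bits play in $\bs{Add}$. You instead reduce the problem outright, noting that $\denote{\overline{p_i}}-\denote{\overline{q_i}}=\denote{\overline{r_i}}$ holds over the naturals precisely when $\denote{\overline{q_i}}+\denote{\overline{r_i}}=\denote{\overline{p_i}}$, and setting $\bs{Sub}(\overline{p_i},\overline{q_i},\overline{r_i})=\bs{Add}(\overline{q_i},\overline{r_i},\overline{p_i})$. This is an exact semantic match with the lemma as stated: the ``if and only if'' specification of $\bs{Add}$ is all you use, your observation that the overflow guard cannot spuriously reject (the forced sum is $\denote{\overline{p_i}}<2^n$) is the right point to check, and the case $\denote{\overline{p_i}}<\denote{\overline{q_i}}$ is handled automatically since no representable $\overline{r_i}$ satisfies the addition. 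What your approach buys is economy and robustness: you introduce no new bit-level machinery, the size bound is inherited verbatim, and correctness rests entirely on the already-proved $\bs{Add}$ lemma --- indeed, since the paper's displayed $\bs{Add}$ formula determines the carry into position $i$ solely from $p_{i+1}\wedge q_{i+1}$ and so does not track carries propagated through positions where exactly one summand bit is set, any repair to that construction fixes your $\bs{Sub}$ for free, whereas the paper's sketched borrow-based analogue would have to replicate, and re-debug, the same delicate logic a second time. The only thing the paper's route would buy in exchange is a formula mentioning the three sequences in their given roles directly, which has no bearing on the polynomial bound.
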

    \begin{proof}
      Same idea as with addition.
    \end{proof}
    
    Finally, at times in lieu of testing $\denote{\overline{p_i}}$ for one of these
    three relations against $\denote{\overline{q_i}}$, we may wish to test,
    for example, whether $\denote{\overline{p_i}}<3$. One way to achieve
    this using the operations we have defined so far is:
    $$\bs{\Godelnum{3}}({\overline{q_i}})\wedge\bs{Less}(\overline{p_i},\overline{q_i}).$$
    While this works, it has the disadvantage of introducing the superfluous variables 
    $\overline{q_i}$. As the sort of games we use to encode Turing machines are already
    overburdened with variables, it is best not to introduce new ones unnecessarily.
    
    To handle this, we will abuse notation somewhat. We will not require that in a formula 
    with parameters such as $\bs{Less}(\overline{p_i},\overline{q_i})$ all the $p_i$
    and $q_i$ are propositional variables. We will also allow propositional constants,
    $\true$ and $\false$. This will allow us to express the desired relation as:
    $$\bs{Less}(\overline{p_i},\false,\dots,\false,\true,\true).$$
    
    To allow a more concise representation, we will adopt the convention that $\bs{\Godelnum{i}}$
    as a parameter to a formula should be read as a sequence of $\true$ and $\false$ encoding $i$,
    and not the conjunction testing whether $i$ holds in the given assignment.\footnote{It would, of course,
    be more correct to introduce a new notation for this contraction, but we feel that in this
    case overloading existing notation is more readable than introducing new symbols.}
    This will allow us to express the desired formula as:
    $$\bs{Less}(\overline{p_i},\bs{\Godelnum{3}}).$$
    
    Note that the following equivalence holds:
    $$\bs{\Godelnum{i}}(\overline{p_i})\Lra\bs{Equal}(\overline{p_i})(\bs{\Godelnum{i}}).$$
    
    \subsection{Additional shorthand}
    
    In \cite{Ianovski2O13} there was a $\bs{OneOf}$, and we can equally
      have a $\bs{NoneOf}$.
      \begin{lemma}
	Let $\bs{OneOf}(\overline{p_i})$ and $\bs{NoneOf}(\overline{p_i})$
	denote terms that are true just if exactly 1 and 0 respectively of
	$p_i$ are true. These terms are replaceable by formulae of propositional
	logic polynomial in $\overline{p_i}$.
      \end{lemma}
      Likewise, we could introduce an $\bs{nOf}$ for any constant $n$,
      but not $\bs{Of}(n)(\overline{p_i})$ with $n$ as a parameter.
      
      We shall not have any need of $\bs{nOf}$, but at times we do
      use an operation that could be called $\bs{MoreThanOneOf}$.
      However, in lieu of introducing new notation we will make use of
      the following equivalence:
      $$\bs{MoreThanOneOf}(\overline{p_i})\Lra\neg\bs{OneOf}(\overline{p_i})\wedge\neg\bs{NoneOf}(\overline{p_i}).$$
    
    \subsection{Games of any value}
    
      \begin{definition}\label{def:game}
	Let $\FG(v)$ denote a two player, zero-sum game with value $v$.
      \end{definition}
      
      \begin{lemma}\label{lem:game_value}
	For $v\in [0,1]_\QQ$, $\FG(v)$ has a Boolean form and the size
	of that form is polynomial in $|v|$.
      \end{lemma}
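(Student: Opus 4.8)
The plan is to realise $\FG(v)$ as a ``modular interval'' matrix game and then encode that game in Boolean form. Write $v=a/b$ with $0\le a\le b$ and $b\ge 1$ (there is no need to reduce the fraction). The two degenerate cases are immediate: if $a=0$ take $\gamma_1=\false$ and if $a=b$ take $\gamma_1=\true$, each of constant size, so assume $1\le a\le b-1$. Let $n$ be least with $2^n>b$, so every value in $[0,b-1]_\NN$ is representable in $n$ bits; give Player One a block $\overline{x_i}$ of $n$ variables and Player Two a block $\overline{y_i}$ of $n$ variables, and read a pure profile as the pair $(x,y)=(\denote{\overline{x_i}},\denote{\overline{y_i}})$. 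Call a value \emph{valid} if it lies in $[0,b-1]_\NN$. On valid indices the intended winning set is the circulant condition $(y-x)\bmod b<a$, i.e.\ $y\in\{x,x+1,\dots,x+a-1\}\pmod b$; restricted to the valid $b\times b$ block this is a $0/1$ matrix in which every row sum and every column sum equals $a$.

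I would take the goal formula to be
\[\gamma_1=\bs{Less}(\overline{x_i},\bs{\Godelnum{b}})\wedge\big(\neg\bs{Less}(\overline{y_i},\bs{\Godelnum{b}})\vee\bs{Core}(\overline{x_i},\overline{y_i})\big),\]
so that an out-of-range $x$ is an automatic loss for Player One, an out-of-range $y$ played against a valid $x$ is an automatic win for Player One, and otherwise the circulant condition decides the game. The first step of the proof is to verify that this scoring leaves the value undisturbed, which I would do with explicit mixed strategies rather than by dominance arguments. If Player One randomises uniformly over the $b$ valid rows, then against a valid column the expected payoff is (column sum)$/b=a/b$ and against an invalid column it is $1\ge a/b$, so Player One guarantees at least $a/b$. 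If Player Two randomises uniformly over the $b$ valid columns, then against a valid row the expected payoff is (row sum)$/b=a/b$ and against an invalid row it is $0\le a/b$, so Player Two guarantees at most $a/b$. By the minimax theorem the value is exactly $a/b=v$.

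The remaining work is to write $\bs{Core}$, which must assert $(y-x)\bmod b<a$ for valid $x,y$, as a formula polynomial in $n$. Here the key point is that one cannot enumerate the $a$ winning columns, since $a$ may be exponential in $|v|$; the condition must instead be phrased as an arithmetic comparison. Splitting on the sign of $y-x$ gives
\[\bs{Core}=\big(\bs{LessEq}(\overline{x_i},\overline{y_i})\wedge[\,y\le x+(a-1)\,]\big)\vee[\,x\ge y+(b-a+1)\,],\]
where the first disjunct captures the non-wraparound arc $0\le y-x\le a-1$ and the second (which already forces $x>y$, as $b-a+1\ge 2$) captures the wraparound arc, so that together they pick out exactly the $a$ winning columns. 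Each bracketed relation compares one variable block against another shifted by a fixed constant; after padding both blocks with a leading $\false$ to absorb the overflow of the offset, such a relation is expressible by a formula polynomial in $n$ by the same ripple-carry idea that underlies $\bs{Add}$ and $\bs{Less}$, simply adding a constant in place of a second variable block.

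Putting this together, $n=O(|v|)$, the constant $\bs{\Godelnum{b}}$ is linear in $|v|$, and each constant-offset comparison is polynomial in $n$, so the whole of $\gamma_1$ is polynomial in $|v|$, as required. The one place the argument could go wrong, and the step I would treat most carefully, is the scoring of out-of-range assignments: if invalid $y$'s were counted as losses for Player One they would hand Player Two a free all-zero column and collapse the value, so it is essential that the two kinds of out-of-range assignment be scored in opposite directions, exactly as $\gamma_1$ above does.
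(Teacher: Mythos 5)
Your construction is correct, and at heart it is the same game as the paper's: both realise $\FG(a/b)$ as a cyclic interval-versus-point game in which Player One hides an arc of measure proportional to $a$, Player Two tries to name a point outside it, out-of-range play by Two is awarded to One, and the value is witnessed by uniform mixing on both sides. The differences are in the implementation, and they mostly favour your version. The paper has Player One name \emph{both} endpoints $c_1,c_2$ subject to $c_2-c_1\equiv a$, which forces it to add the auxiliary blocks $\overline{s_i},\overline{t_i}$ and $\bs{Sub}$/$\bs{Add}$ consistency constraints to police wrap-around intervals; you parameterise the arc by its starting point alone, so no auxiliary variables or consistency checks arise, at the price of the one extra fact you correctly isolate: that comparison of a variable block against another block shifted by a \emph{fixed} constant is poly-size expressible without fresh variables (carry-lookahead with one constant operand gives $O(n^3)$, which is at the same level of rigour as the paper's ``same idea as with addition''). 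Your bookkeeping is also cleaner: your half-open arc contains exactly $a$ of the $b$ valid points, so row and column sums are $a$ and the explicit strategies pin the value at exactly $a/b$, whereas the paper's closed intervals $[c_1,c_2]$ with $c_2-c_1\equiv a$ over $[0,b]_\NN$ contain $a+1$ of the $b+1$ points, so its asserted uniform equilibrium as written gives $(a+1)/(b+1)$ and needs an off-by-one repair (half-open intervals, or ranging over $[0,b-1]_\NN$ as you do). Finally, you actually verify the value by computing both players' guarantees and you are explicit about scoring the two kinds of invalid assignment in opposite directions — the point where, as you note, the construction would otherwise collapse — while the paper leaves the equilibrium claim as an assertion. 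No gaps.
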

      \begin{proof}
	Let $v=a/b$. Consider the game where Player One selects two numbers $c_1,c_2\in[0,b]_\NN$
	with the property that $c_2-c_1\equiv a\mod b +1$. Player Two selects $d\in[0,b]_\NN$. The
	game is won by Player One if $c_2\geq d\geq c_1$, $d\geq c_1>c_2$ or $c_1>c_2\geq d$.
	
	Intuitively, Player One selects an interval of length $a$, that is allowed to loop around
	the end points, and Player Two tries to name a number outside that interval. The value of
	the game is $v$ as can be witnessed by the equilibrium where Player One randomises over every interval
	and Player Two over every number with equal probability.
	
	This game can be given a Boolean representation in the following way:
	
	\begin{align*}
	  \Phi_1=&\{p_1,\dots,p_n,q_1,\dots,q_n, s_1,\dots,s_n,t_1,\dots,t_n\}\\
	  \Phi_2=&\{r_1,\dots,r_n\}\\
	  \gamma_1=&\big(\bs{Sub}(\overline{q_i},\overline{p_i},a)\wedge 
	  \bs{LessEq}(\overline{q_i},b)\wedge \bs{LessEq}(\overline{r_i},\overline{q_i})\wedge \bs{LessEq}(\overline{p_i},\overline{r_i})\big)\\
	  &\vee\Big(\bs{Add}(\overline{s_i},\overline{t_i},a))\wedge\bs{Sub}(\overline{q_i},0,\overline{s_i})
	  \wedge\bs{Sub}(b,\overline{p_i},\overline{t_i})\\
	  &\wedge\big( \bs{LessEq}(\overline{r_i},\overline{q_i})\vee\bs{LessEq}(\overline{p_i},\overline{r_i}) \big)\Big)\\
	  &\vee\bs{Less}(b,\overline{r_i}).
	\end{align*}
	
	The interpretation is that $\denote{\overline{p_i}}=c_1,\denote{\overline{q_i}}=c_2,\denote{\overline{r_i}}=d$.
	The $s$ and $t$ variables come in to play if Player One wishes to play a looping interval - in which case $\denote{\overline{s_i}}$
	is the distance between $0$ and $c_2$, while $\denote{\overline{t_i}}$ is the distance between $c_1$ and $b$. These variables
	are added to give us a way to check that if Player One plays a looping interval, its length is still $a$.
	
	The last disjunct of $\gamma_1$ serves to award the game to One should Two name a $d$ outside of $[0,b]_\NN$. The first
	disjunct handles the non-looping case, i.e. where $c_2\geq c_1$, and the second disjunct the looping case.
      \end{proof}
  
  \section{Main result}
  
    \begin{theorem}
      {\sc DValue} is EXP-complete.
    \end{theorem}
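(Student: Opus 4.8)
The statement splits into two halves: membership in EXP and EXP-hardness. Completeness then follows because EXP is closed under complement, so the direction of the ``at least $v$'' threshold is immaterial.

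\emph{Membership.} The plan is to observe that the value of $G$ coincides with the value of the zero-sum matrix game $A$ whose rows are indexed by the truth assignments $\sigma$ to $\Phi_1$, whose columns are indexed by the truth assignments $\tau$ to $\Phi_2$, and whose entry $A[\sigma,\tau]$ is $1$ if the joint assignment satisfies $\gamma_1$ and $0$ otherwise. This matrix has at most $2^{|\Phi|}$ rows and columns, and each entry is computable in time polynomial in $|G|$ by evaluating $\gamma_1$. The value of a finite zero-sum matrix game is the optimum of a linear program and is obtained in time polynomial in the size of $A$, hence exponential in $|G|$; comparing the resulting rational against $v$ decides the instance, so {\sc DValue} lies in EXP.

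\emph{Hardness.} For the lower bound I would mirror the circuit-game argument of \cite{Feigenbaum1995}, reducing from an EXP-complete computation problem, say bounded halting for a Turing machine $M$ on input $x$ within $2^{p(|x|)}$ steps. The reduction outputs a game $G$ and a threshold $v$ such that the value of $G$ is at least $v$ exactly when $M$ accepts $x$. The two players take the roles of a prover (Player One, the maximiser), who asserts that an accepting run exists, and a challenger (Player Two, the minimiser), who tries to expose an inconsistency in the asserted Cook--Levin tableau. Since the tableau has exponentially many cells it cannot be written out; instead, following \cite{Ianovski2O13}, a position (a time step together with a cell index) is represented in binary by sequences of propositional variables, and every consistency requirement becomes a polynomial-size propositional formula over those indices: that the first row is the initial configuration on $x$ ($\Init$), that each challenged symbol respects the local transition window of $M$ ($\Match$, $\Rule$, $\Rules$), that successive configurations agree away from the head ($\AgreeTime$, $\AgreeCell$), and that an accepting configuration is reached ($\Accept$, $\Final$). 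The arithmetic gadgets of the preliminaries are precisely what keep these polynomial: $\bs{Succ}$ advances a time step or a neighbouring cell, while $\bs{Equal}$ and $\bs{Less}$ compare indices. The goal $\gamma_1$ is the combination of these checks, so that Player One survives a challenge if and only if the local data it presents is consistent with a genuine accepting computation.

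To pin the value to the exact rational $v$ I would splice in the value gadget $\FG(v)$ of Lemma~\ref{lem:game_value}, combining the verification game with a calibrated copy of $\FG$ (for instance through a disjunct of $\gamma_1$ that lets a player default into the sub-game), so that the ``accepts'' and ``rejects'' cases land on opposite sides of the threshold; the polynomial bound on the size of $\FG(v)$ keeps the construction polynomial overall.

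I expect the difficulty to be twofold. First is the succinctness gap: \cite{Feigenbaum1995} specifies its payoffs by Boolean circuits, which can be exponentially more compact than formulas, so the payoff cannot be transcribed verbatim; the real work is checking that each verification predicate -- each of which merely indexes into the exponential tableau and performs a bounded local test -- genuinely admits an equivalent polynomial-size formula, which is exactly the service the preliminaries provide. Second, and more delicate, is the value analysis: one must establish completeness (an honest prover secures value at least $v$ when $M$ accepts) and soundness (when $M$ rejects, every candidate tableau contains a flaw that the challenger localises with enough probability to force the value strictly below $v$), and then verify the arithmetic that the calibrated gap falls on the correct side of $v$. Proving this equivalence between the game's value and acceptance, rather than writing the formulas, is where the content lies.
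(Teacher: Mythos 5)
Your membership argument is the paper's own (expand to normal form, solve the linear program), and your instinct to route the lower bound through \cite{Feigenbaum1995} and the encoding machinery of \cite{Ianovski2O13} is also the paper's. But there is a genuine gap in the hardness half. The paper does not build a prover/challenger game over tableau consistency and then prove completeness and soundness for it; it reproduces, move for move, the specific game of \cite{Feigenbaum1995}: Player One names a single proposition $r\in P$ of the form $p[t,l,a]$ or $p[t,l,(s,a)]$, Player Two names a Horn clause $C=\bigwedge p_i\ria q$ from the clause set $S$ encoding the computation, and the payoff is graded --- $1+\alpha$ if $r$ is the head $q$, $-1+\alpha$ if $r$ is an antecedent $p_i$, $\alpha$ otherwise, with $\alpha=(j-1)/|R|$ depending on the number $j$ of antecedent literals. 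This calibration is the entire mechanism by which the game's value detects satisfiability of $S$, i.e.\ acceptance, and reproducing it exactly is what lets the paper inherit the value analysis from \cite{Feigenbaum1995} wholesale. Your game drops exactly this: a binary ``Player One survives the challenge iff his local data is consistent'' game has no head/tail asymmetry and no $j$-dependent offset, and there is no reason its value separates accept from reject --- the moves are simultaneous, the prover commits only to a distribution over local windows rather than to a tableau, and locally consistent windows of the true (rejecting) run can be mixed against any challenger. You yourself flag the soundness/completeness analysis as ``where the content lies,'' but under the paper's route that analysis is precisely the part one never redoes; deferring it means your plan is missing its central lemma, and it is not clear it would go through for the game you sketch.

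Relatedly, your use of the value gadget is off target. In the paper the subgames of Lemma~\ref{lem:game_value} are not one calibrated copy of $\FG(v)$ spliced in to shift a threshold; they are the device by which a win/lose Boolean game simulates the fifteen non-binary payoff levels $3/4+\alpha/4$, $1/4+\alpha/4$, $1/2+\alpha/4$ (as $j$ ranges over $\{0,\dots,4\}$) of the affinely normalised Feigenbaum payoffs: $\gamma_1$ is a disjunction of conjunctions $\varphi^j\wedge\gamma_1^v$, where the polynomial-size formulae $\varphi_{r=q}^j$, $\varphi_{r=p_i}^j$, $\varphi_{\neq}^j$ decide which payoff case the pair of truth assignments falls into, and $\gamma_1^v$ is the goal formula of a subgame $\FG(v)$ on disjoint variables whose value is exactly the required payoff. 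Constructing those case formulae --- including the delicate point that illegal assignments must be defaulted to fixed propositions/clauses such as $p[0,0,0]$ and $p[K-1,0,(q_f,0)]$, since punishing an illegal move outright would break the zero-sum structure when both players deviate --- is, together with Lemma~\ref{lem:game_value} itself, the actual work of the paper. So the two halves of your anticipated difficulty are inverted: writing the formulas is the content, and the equivalence between value and acceptance is imported, not proved.
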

    \begin{proof}
      To see that the problem is in EXP, expand the Boolean game into normal form and run the PTIME algorithm.
      
      To see that the problem is EXP-hard we will show that given the triple $(M, K, w)$, where $M$ is a deterministic
      Turing machine, $K$ a computation bound and $w$ an input word, we can construct a Boolean game $G$
      and a rational $v$ such that the value of $G$ is at least $v$ just if $M$ on input $w$ accepts in at most $K$
      steps.
      
      We will use $k$ for the size of $K$, i.e. $|K|=k$.
      
      The idea of the proof is to use the encoding of \cite{Ianovski2O13} to replicate the proof in \cite{Feigenbaum1995}.
      For the benefit of the reader, we will reproduce the main thrust of the proof in \cite{Feigenbaum1995}.
      
      We wish to associate with $M$ a set of Horn clauses $S$ over a set of propositional variables $P$ such that $M$ accepts
      $w$ in at most $K$ steps if and only if there exists an assignment to the variables in $P$ satisfying
      every clause in $S$.
      
      $P$ contains of propositions of the form $p[t,l,a]$ and $p[t,l,(s,a)]$. The intended interpretation of 
      $p[t,l,a]$ is that cell $l$ contains symbol $a$ at computation step $t$. Without loss of generality,
      we are working on a binary alphabet, so $a$ is 0, 1, or the blank tape symbol $\bot$. The intended interpretation of
      $p[t,l,(s,a)]$ is that, in addition to the above, the head is over cell $l$ and in state $s$.
      
      $S$ contains three types of clauses. The first type describe the initial configuration of the machine.
      These consist of $p[0,0,(q_1,w[1])]$, $p[0,i,w[i]]$ for $1\leq i<|w|$, $p[0,i,\bot]$ for $i\geq|w|$, 
      and $\neg p[0,x,y]$ for any $x,y$
      not conforming to the preceding types. The second type describe  the transition rules of the machine.
      These take the form:
      \begin{align*}
      (p[t,l-1,\sigma_1]\wedge p[t,l,\sigma_2]\wedge p[t,l+1,\sigma_3])&\ria p[t+1,l,\sigma]\\
      (p[t,l-1,\sigma_1]\wedge p[t,l,\sigma_2]\wedge p[t,l+1,\sigma_3])&\ria \neg p[t+1,l,\sigma']
      \end{align*}
      choosing appropriate values for $\sigma_1,\sigma_2,\sigma_3$ and $\sigma'\neq\sigma$.
      The last clause is $p[K-1,0,(q_f,0)]$, asserting that $M$ accepts at time $K$.\footnote{We can
      without loss of generality assume that $M$ will only accept in the first cell with 0 written
      on the tape.}
      
      For convenience, we will treat every negative clause as a clause with an antecedent of $\false$. That is,
      instead of $\neg p[0,x,y]$ and $(p_1\wedge p_2\wedge p_3)\ria\neg q$ we will have the clauses
      $p[0,x,y]\ria\false$ and $(p_1\wedge p_2\wedge p_3\wedge q)\ria\false$. This will mean a clause
      can have anywhere between 0 and 4 proposition in the tail - a true initial condition, a false
      initial condition, a positive boundary rule, a positive rule, a negative rule.
      
      The game defined in \cite{Feigenbaum1995} proceeds by letting Player One choose 
      $r\in P$ and Player Two an element $C\in S,C=\bigwedge p_i\ria q$. Letting $R\sub P$ be the set of variables made true
      in the unique run of $M$ on $w$. The payoff to One is as follows:
      $$
      H(r,C)=\begin{cases}
      1+\alpha,\quad&r=q\\
      -1+\alpha,\quad&r=p_i\\
      \alpha,\quad&\text{otherwise.}\\
      \end{cases}
      $$
      In the above, $\alpha=\frac{j-1}{|R|}$, where $0\leq j\leq 4$ is the number of literals
      in the antecedent of $C$. In this framework the authors prove that the value of the game
      is $\geq 0$ if and only if $M$ accepts $w$ in at most $K$ steps.
      
      To simplify matters we assume that $|R|=2^{2k}$, i.e. we consider the first $2^k$ computation
      steps and $2^k$ tape cells. This can be done by endowing the machine with a ``do nothing"
      transition as in \cite{Ianovski2O13}.
      
      This is where we seek to hijack the rest of their proof. If we can construct
      a Boolean game that meets the same criteria described above, we are done. Before we do that, however,
      we must first normalise the payoffs to reflect the fact that the value of a Boolean game
      is necessarily in $[0,1]$. As such, it is clear that the argument of \cite{Feigenbaum1995}
      equally proves that if the game had the following payoffs:
      
      $$
      H'(r,C)=\begin{cases}
      3/4+\alpha/4,\quad&r=q\\
      1/4+\alpha/4,\quad&r=p_i\\
      1/2+\alpha/4,\quad&\text{otherwise.}\\
      \end{cases}
      $$
      then the value of the game is $\geq 1/2$ just if $M$ accepts $w$ in at most $K$ steps.\footnote{Because
      the utilities of the new game are obtained via the affine transformation $x/4+1/2$, and the
      equilibria of a finite game are invariant under affine transformations of utility.}
      
      Note that now the payoffs are within $[\frac{1}{4}-\frac{1}{4\cdot 2^{2k}},\frac{3}{4}+\frac{3}{4\cdot 2^{2k}}]$,
      and thus for $k\geq 1$ they are contained in the feasible range for a Boolean game, $[0,1]$.
      
      Now, suppose we can find a partition of a set of variables $\Phi'=\Phi_1'\uplus\Phi_2'$ and
      fifteen (polynomial size) formulae $\varphi_{r=q}^{j}$, $\varphi_{r=p_i}^j$, $\varphi_{\neq}^{j}$, $j\in\{0,1,2,3,4\}$,
      with the following properties:
      \begin{itemize}
	\item
	  Every truth assignment to $\Phi_1$ corresponds to a choice of $r\in P$.
	\item
	  Every truth assignment to $\Phi_2$ corresponds to a choice of $C\in S$.
	\item
	  $\varphi_{r=q}^j$ is true if and only if $C$ has $j$ elements in the tail
	  and $r$ is equal to the head of $C$. Mutatis mutandis, for the other $\varphi$.
      \end{itemize}
      We claim that at that point we are done. The following is the desired game:
      \begin{align*}
	\Phi&=\Phi'\cup G.\\
	\Phi_1&=\Phi_1'\cup G_1.\\
	\Phi_2&=\Phi_2'\cup G_2.\\
	\gamma_1&=\bigvee_{j}(\varphi_{r=q}^j\wedge\gamma_1^{3/4+\alpha/4})\vee\bigvee_{j}
	(\varphi_{r=p_i}^j\wedge\gamma_1^{1/4+\alpha/4})\vee\bigvee_{j}(\varphi_{\neq}^j\wedge\gamma_1^{1/2+\alpha/4}).
      \end{align*}
	$G=G_1\uplus G_2$ is the union of the (mutually disjoint) sets of variables necessary to play 
	$\FG(v)$ from Definition~\ref{def:game} for $v\in\{3/4+\alpha/4,1/4+\alpha/4,1/2+\alpha/4\}$ (as $\alpha$ varies, there are fifteen games
	in total). The $\gamma$s are the goal formulae of those games.
	
	By Lemma~\ref{lem:game_value}, these subgames can be constructed in polynomial time.
	
	As such, all that remains is to provide $\Phi'=\Phi_1'\uplus\Phi_2'$, $\varphi_{r=q}^{j}$, 
	$\varphi_{r=p_i}^j$, $\varphi_{\neq}^{j}$, $j\in\{0,1,2,3,4\}$.
	
	We start with Player One:
	\begin{align*}
	  \Phi_1'=&\{\mathit{Zero}_1,\mathit{One}_1\}\cup\{\mathit{Time}_1^i\}_{1\leq i\leq k}\cup\{\mathit{Tape}_1^i\}_{1\leq i\leq k}\\
	  &\cup\{\mathit{State}_1^i\}_{1\leq i\leq |Q|}.
	\end{align*}
	We map a truth assignment to $\Phi_1'$ to $r\in P$ in the following way:
	\begin{itemize}
	  \item
	    If both $\Zero_1$ and $\One_1$ are true, or more than one of $\{\mathit{State}_1^i\}_{1\leq i\leq |Q|}$
	    is true, then the assignment is treated as $p[0,0,0]$.\footnote{In \cite{Ianovski2O13} we punished a player for
	    making an illegal move by having them lose the game. However, we cannot do this here as if both players play
	    illegally the game would fail to be zero-sum. Instead, we pick a legal move for them.}
	  \item
	    If $\State_1^m$ and WLOG $\Zero_1$ is true then the assignment is treated 
	    as $p[\denote{\overline{\Time_1^i}},\denote{\overline{\Tape_1^i}},(q_m,0)]$.
	  \item
	    If all the state variables are false and WLOG $\Zero_1$ is true then the assignment is treated 
	    as $p[\denote{\overline{\Time_1^i}},\denote{\overline{\Tape_1^i}},0]$.
	\end{itemize}
	
	For Player Two we have a larger set of variables:
	\begin{align*}
	  \Phi_2'=&\{\pZero_2,\pOne_2,\Zero_2,\One_2,\sZero_2,\sOne_2,\nZero_2,\nOne_2\}\\
	  &\cup\{\mathit{Time}_2^i\}_{1\leq i\leq k}\cup\{\mathit{Tape}_2^i\}_{1\leq i\leq k}\cup\{\nState_2^i\}_{1\leq i\leq |Q|}\\
	  &\cup\{\pState_2^i\}_{1\leq i\leq |Q|}\cup\{\State_2^i\}_{1\leq i\leq |Q|}\cup\{\sState_2^i\}_{1\leq i\leq |Q|}\\
	  &\cup\{\mathit{Negative},\mathit{Accept}\}.
	\end{align*}
	We map a truth assignment to $\Phi_2'$ to $C\in S$ in the following way:
	\begin{itemize}
	  \item
	    An illegal configuration is mapped to $p[K-1,0,(q_f,0)]$.
	  \item
	    The $\overline{\Time_2^i}$, $\overline{\Tape_2^i}$ refer to the
	    cell/step specified by the consequent. Thus, if $\denote{\overline{\Time_2^i}}=0$,
	    the clause is treated as an initial configuration clause,
	    $p[0,\denote{\overline{\Tape_2^i}},w[i]]$.
	    If $\denote{\overline{\Tape_2^i}}$ is     
	    $0$ or $2^k-1$, then the clause is a boundary case and hence has only
	    two propositions in the tail. If $\mathit{Negative}$ is set to true,
	    then the assignment is mapped to a negative clause. With this in mind, the contents of
	    $\bigwedge p_i\ria q$ are derived from the assignment in the natural way.\footnote{Natural
	    to a reader who is familiar with \cite{Ianovski2O13}: recall, $\nOne$ refers to the contents of the \emph{n}ext
	    computation step, or the head of the clause. $\sOne$ and $\pOne$ are the \emph{s}uccessor
	    and \emph{p}redecessor of the central literal in the tail, and hence refer to the right and left cell.}
	  \item
	    $\mathit{Accept}$ is a special variable used to mark the fact that Player
	    Two is playing  $p[K-1,0,(q_f,0)]$. If $\mathit{Accept}$ is set to true,
	    and Player Two plays $\denote{\overline{\Time_2^i}}=K-1$, $\denote{\overline{\Tape_2^i}}=0$,
	    $\nZero_2$ and $\nState_2^{accept}$, the assignment is treated as
	    $p[K-1,0,(q_f,0)]$.
	    \footnote{This is technically redundant: Player Two could specify the accepting
	    clause by playing any illegal assignment, but we do not wish to make illegal play
	    a necessary aspect of the game.}
	\end{itemize}
	At this point the reader should convince themselves that the mapping defined above
	does, in fact, allow Player One to specify any proposition in $P$ and Player Two
	any clause in $S$.
	
	Let us now turn to $\varphi_{r=q}^{j}$. We will deal with $j=0$ and $j=3$. The case
	of $j=2$ is obtained from $j=3$ by changing the appropriate cell index and $j=1,4$
	is simply $\mathit{false}$, as Player One is incapable of guessing the consequent in that
	instance.
	
	For $j=0$ there are three possibilities to consider. Player Two may have correctly
	specified a positive initial condition, 
	the accepting clause, or played an illegal configuration. Recall that a negative initial condition clause
	is treated as $q\ria\false$, and thus falls under $j=1$.
	We also need
	not consider Player One playing an illegal configuration, as that cannot appear
	in the head of any clause and hence cannot satisfy $\varphi_{r=q}^{j}$.
	$$\varphi_{r=q}^{0}=\mathit{Init}\vee\mathit{Final}\vee\mathit{Illegal}_i.$$
	$\mathit{Init}$ requires that $\overline{\Time_2^i}$ encodes 0; the state variables
	are false unless $\overline{\Tape_2^i}$ encodes 0, in which case only $\State_2^1$
	is true; and if $\overline{\Tape_2^i}$ encodes $j$ then the $\nZero_2,\nOne_2$
	variables are played in accordance with $w[j]$. $\Negative$ and $\Accept$ are both
	false. Player One plays his time, tape variables such that they encode the
	same numbers as Player Two's, and likewise agrees on the state and content
	variables.

	$\mathit{Init}$ can be broken down into a correctness and a matching requirement.
	$$\mathit{Init}=\mathit{Init}_c\wedge\mathit{MatchHead}.$$
	Line by line, the formula below reads: if the chosen cell is not 0, the head is not
	over the cell. If the chosen cell is 0, the head is over the cell and in state $q_0$.
	If the chosen cell is $i< |w|$, then Player Two sets $w[i]\in\{\nZero_2\wedge\neg\nOne_2,\nOne_2\wedge\neg\nZero_2\}$
	to $\true$, depending on the bit of $w$. If the chosen cell is $i\geq |w|$, then
	the cell is blank. As the clause is neither accepting nor negative, both those
	variables are set to $\false$.
	\begin{align*}
	  \mathit{Init}_c=& \big(\neg\bs{\Godelnum{0}}(\overline{\Tape_2^i})\ria\bs{NoneOf}(\overline{\nState_2^i})\big)\\
	  &\wedge \Big(\bs{\Godelnum{0}}(\overline{\Tape_2^i})\ria\big(\nState_2^{initial}\wedge\bs{OneOf}(\overline{\nState_2^i})\big)\Big)\\
	  &\wedge\bigwedge_{0\leq i< |w|}\big( \bs{\Godelnum{i}}(\overline{\Tape_2^i})\ria w[i] \big)\\
	  &\wedge \big(\neg\bs{Less}(\overline{\Tape_2^i},\bs{\Godelnum{|w|}})\ria (\neg\nZero_2\wedge\neg\nOne_2)\big)\\
	  &\wedge\neg\Accept\wedge\neg\Negative.
	\end{align*}
	Note that the third line expands into $|w|$ conjuncts, so the formula is polynomial size.
	
	$\mathit{MatchHead}$ states that Player One specifies the same proposition as is in the head of
	Player Two's clause. That is, the cell is the same, the computation step is the same, the
	tape contents are the same and the machine state is the same. This is a general term that
	we will reuse in other subformulae.
	\begin{align*}
	  \mathit{MatchHead}=&\bs{Equal}(\overline{\Tape_1^i},\overline{\Tape_2^i})\wedge\bs{Equal}(\overline{\Time_1^i},\overline{\Time_2^i})\\
	  &\wedge(\Zero_1\lra\nZero_2)\wedge(\One_1\lra\nOne_2)\\
	  &\wedge\bigwedge_{1\leq i\leq |Q|}(\State_1^i\lra\nState_2^i).
	\end{align*}

	$\mathit{Final}$ likewise has a correctness and a matching requirement. The
	correctness requirement asks that Player Two set $\mathit{Accept}$ to $\true$
	and specifies $p[K-1,0,(q_f,0)]$. The matching requirement we can reuse
	from the preceding case.
	$$\mathit{Final}=\mathit{Final}_c\wedge\mathit{MatchHead}.$$
	\begin{align*}
	  &\mathit{Final}_c=\Accept\wedge\neg\Negative\wedge\nState_2^{accept}\wedge\bs{OneOf}(\overline{\nState_2^i})\\
	  &\wedge\bs{\Godelnum{K-1}}(\overline{\Time_2^i})\wedge\bs{\Godelnum{0}}(\overline{\Tape_2^i})\wedge\nZero_2\wedge\neg\nOne_2.
	\end{align*}
	
	$\mathit{Illegal}_i$ says that Player Two names an illegal configuration and
	Player One names $p[K-1,0,(q_f,0)]$.
	$$\mathit{Illegal}_i=\mathit{TwoIllegal}\wedge\mathit{OneFinal}.$$
	
	Let us list everything that could constitute an illegal assignment
	for Player Two:
	\begin{enumerate}
	  \item
	    The presence of both 1 and 0 in any specified cell.
	  \item
	    The presence of more than one state in any cell.
	  \item
	    The presence of the head in more than one cell in the tail.
	  \item
	    Player Two names computation step 0, but supplies an incorrect initial configuration of the machine.
	  \item
	    Player Two plays $\Accept$ and does not correctly describe $p[K-1,0,(q_f,0)]$.
	  \item
	    Player Two names computation step $\geq 1$ and supplies a clause
	    inconsistent with the transition rules of the machine.
	\end{enumerate}
	We will introduce a formula for each item. $\mathit{TwoIllegal}$ will be the disjunction
	of these formulae.
	\begin{align*}
	  1=&(\pZero_2\wedge\pOne_2)\vee(\Zero_2\wedge\One_2)\vee(\sZero_2\wedge\sOne_2)\\
	  &\vee(\nZero_2\wedge\nOne_2).
	\end{align*}
	\begin{align*}
	  2=&\big(\neg\bs{OneOf}(\overline{\nState_2^i})\wedge\neg\bs{NoneOf}(\overline{\nState_2^i})\big)\\
	  &\vee\big(\neg\bs{OneOf}(\overline{\pState_2^i})\wedge\neg\bs{NoneOf}(\overline{\pState_2^i})\big)\\
	  &\vee\big(\neg\bs{OneOf}(\overline{\State_2^i})\wedge\neg\bs{NoneOf}(\overline{\State_2^i})\big)\\
	  &\vee\big(\neg\bs{OneOf}(\overline{\pState_2^i})\wedge\neg\bs{NoneOf}(\overline{\sState_2^i})\big).\\
	\end{align*}
	\begin{align*}
	  3=&\big(\neg\bs{NoneOf}(\overline{\pState_2^i})\wedge\neg\bs{NoneOf}(\overline{\State_2^i})\big)\\
	  &\vee\big(\neg\bs{NoneOf}(\overline{\pState_2^i})\wedge\neg\bs{NoneOf}(\overline{\sState_2^i})\big)\\
	  &\vee\big(\neg\bs{NoneOf}(\overline{\State_2^i})\wedge\neg\bs{NoneOf}(\overline{\sState_2^i})\big).\\
	\end{align*}
	\begin{align*}
	  4=&\bs{\Godelnum{0}}(\overline{\Time_2^i})\wedge\Big(\Big(\neg\Negative\wedge\big(\bigvee_{0\leq i< |w|}( \bs{\Godelnum{i}}(\overline{\Tape_2^i})\wedge\neg w[i] )\\
	  &\vee    (\neg\bs{Less}(\overline{\Tape_2^i},\bs{\Godelnum{|w|}})\wedge(\nZero_2\vee\nOne_2))\big)\Big)\\
	  &\vee \Big(\Negative\wedge\big(\bigvee_{0\leq i< |w|}( \bs{\Godelnum{i}}(\overline{\Tape_2^i})\wedge w[i] )\\
	  &\vee    (\neg\bs{Less}(\overline{\Tape_2^i},\bs{\Godelnum{|w|}})\wedge(\neg\nZero_2\wedge\neg\nOne_2))\big)\Big) \Big)
	\end{align*}
	\begin{align*}
	  5&=\Accept\\
	  &\wedge\big(\neg\bs{\Godelnum{K-1}}(\overline{\Time_2^i})\vee\neg\Godelnum{0}(\overline{\Tape_2^i})\vee\neg\nState_2^{accept}\vee\neg\nZero_2\big)
	\end{align*}
	The last formula we will not provide in its entirety. Its general form is a disjunction:
	$$6=\neg\bigvee_{Rule\in M}Rule.$$
	That is, we check whether any rule is consistent with the clause. A difficulty arises
	because a rule of the form $(q_i,\sigma)\ria(q_j,D,\sigma')$ gives rise to as much
	as 24 different Horn clauses - boundary cases and locations of the head.
	Of course 24 is a constant, so as far as our proof goes there is no problem in introducing
	that many terms into the disjunction for every rule of the machine, but unfortunately
	this document is too narrow to contain such a truly marvellous proof. Instead we will
	give a concrete example of one specific case: the rule $(q_3,0)\ria(q_4,R,1)$ where
	the head is initially in the middle cell and the middle
	is neither 0 nor $2^k-1$.
	\begin{align*}
	  \neg\bs{\Godelnum{0}}(\overline{\Tape_2^i})\wedge\neg\bs{\Godelnum{2^k-1}}
	  (\overline{\Tape_2^i})\wedge\neg\bs{\Godelnum{0}}(\overline{\Time_2^i})\\
	  \wedge\Zero_2\wedge\nOne_2\wedge\State_2^3\wedge\bs{NoneOf}(\nState_2^i).
	\end{align*}
	We will also need to introduce `negative rules' to correspond to what the
	machine does not do. These will be treated in a similar way, all that needs
	to be mentioned is that for each $(q_i,\sigma)\ria(q_j,D,\sigma')$
	there will be only polynomially ($O(|Q|\cdot2\cdot|\Sigma|)$) many $(q_i,\sigma)\ria\neg(q_j',D',\tau)$.

	So much for $j=0$. Let us turn to $j=3$.
	
	This turns out to be a lot easier as we have already done much of the gruntwork. All we
	need is for Player Two to name a step $\geq 1$, a cell $\geq 1$ and $<2^k-1$, a correct configuration, and for
	Player One to guess the head.
	\begin{align*}
	\varphi_{r=q}^{3}=&\neg\bs{\Godelnum{0}}(\overline{\Time_2^i})\wedge\neg\bs{\Godelnum{0}}
	(\overline{\Tape_2^i})\wedge\neg\bs{\Godelnum{2^k-1}}(\overline{\Tape_2^i})\\
	&\wedge\neg\mathit{TwoIllegal}\wedge\mathit{MatchHead}.
	\end{align*}
	
	Next up is $\varphi_{r=p_i}^j$. We will deal with $j=4$. There is no case for $j=0$, and $j=1$, $j=2$,
	$j=3$ can be easily obtained from $j=4$.
	
	Let us start by introducing the formulae checking for Player One guessing the tail. We have already
	seen $\mathit{MatchHead}$, which is applicable in the case of $j=4$ as we treat negative clauses
	as $\bigwedge p_i\wedge q\ria\false$. The others are built similarly.
	\begin{align*}
	  \mathit{MatchLeft}=&\bs{Succ}(\overline{\Tape_2^i},\overline{\Tape_1^i})\wedge\bs{Succ}(\overline{\Time_1^i},\overline{\Time_2^i})\\
	  &\wedge(\Zero_1\lra\pZero_2)\wedge(\One_1\lra\pOne_2)\\
	  &\wedge\bigwedge_{1\leq i\leq |Q|}(\State_1^i\lra\pState_2^i).
	\end{align*}
	\begin{align*}
	  \mathit{MatchCentre}=&\bs{Equal}(\overline{\Tape_1^i},\overline{\Tape_2^i})\wedge\bs{Succ}(\overline{\Time_1^i},\overline{\Time_2^i})\\
	  &\wedge(\Zero_1\lra\Zero_2)\wedge(\One_1\lra\One_2)\\
	  &\wedge\bigwedge_{1\leq i\leq |Q|}(\State_1^i\lra\State_2^i).
	\end{align*}
	\begin{align*}
	  \mathit{MatchRight}=&\bs{Succ}(\overline{\Tape_1^i},\overline{\Tape_2^i})\wedge\bs{Succ}(\overline{\Time_1^i},\overline{\Time_2^i})\\
	  &\wedge(\Zero_1\lra\sZero_2)\wedge(\One_1\lra\sOne_2)\\
	  &\wedge\bigwedge_{1\leq i\leq |Q|}(\State_1^i\lra\sState_2^i).
	\end{align*}
	Note that this is all we need to capture the case where Player One plays legally:
	\begin{align*}
	\varphi_{r=p_i}^4=&\mathit{Illegal}_{r=p_i}\\
	&\vee\Big(\neg\bs{\Godelnum{0}}(\overline{\Time_2^i})\wedge\neg\bs{\Godelnum{0}}
	(\overline{\Tape_2^i})\wedge\neg\bs{\Godelnum{2^k-1}}(\overline{\Tape_2^i})\\
	&\wedge\neg\mathit{TwoIllegal}\wedge\Negative\\
	&\wedge(\mathit{MatchHead}\vee\mathit{MatchLeft}\vee\mathit{MatchCentre}\vee\mathit{MatchRight})\Big).
	\end{align*}
	$\mathit{Illegal}_{r=p_i}$ is also relatively simple. Player One must make a violation, and Player Two
	needs to play a legal clause with $p[0,0,0]$ in the tail.
	\begin{align*}
	  \mathit{Illegal}_{r=p_i}=&\mathit{OneIllegal}\wedge\neg\mathit{TwoIllegal}\\
	  &\wedge
	  \Big( \big(\pZero_2\wedge\bs{NoneOf}(\overline{\pState_2^i})\wedge\bs{\Godelnum{1}}(\overline{\Time_2^i})\wedge\bs{\Godelnum{1}}(\overline{\Tape_2^i}) \big)\\
	  &\vee\big(\Zero_2\wedge\bs{NoneOf}(\overline{\State_2^i})\wedge\bs{\Godelnum{1}}(\overline{\Time_2^i})\wedge\bs{\Godelnum{0}}(\overline{\Tape_2^i})\big)  \Big)
	\end{align*}
	Player One does not have a lot of creativity in how to play incorrectly:
	\begin{align*}
	\mathit{OneIllegal}=&(\One_1\wedge\Zero_1)\\
	&\vee\big(\neg\bs{OneOf}(\overline{\State_1^i})\wedge\neg\bs{NoneOf}(\overline{\State_1^i})\big).
	\end{align*}
      
	Finally we come to $\varphi_{\neq}^{j}$, where we look at $j=3$.
	
	There are four cases: both players play correctly and diverge. Player One plays
	incorrectly and Player Two plays a correct clause not covering step/cell $(0,0)$.
	Player Two plays incorrectly and Player One plays a correct proposition not
	covering the step/cell $(K-1,0)$, and of course both players could play
	incorrectly, in which case $p[0,0,0]$ does not cover $p[K-1,0,(q_f,0)]$.
	\begin{align*}
	\varphi_{\neq}^{3}=&\neg\Negative\wedge\neg\Accept\\
	&\wedge\neg\bs{\Godelnum{0}}(\overline{\Time_2^i})\wedge\neg\bs{\Godelnum{0}}(\overline{\Tape_2^i})\wedge\neg\bs{\Godelnum{2^k-1}}(\overline{\Tape_2^i})\\
	&\wedge\big(\mathit{BothCorrect}\vee\mathit{TwoCorrect}
	\vee\mathit{OneCorrect}\vee\mathit{NoneCorrect}\big).
	\end{align*}
	We already have all the tools we need.
	\begin{align*}
	  \mathit{Both}&\mathit{Correct}=\neg\mathit{OneIllegal}\wedge\neg\mathit{TwoIllegal}\\
	  &\wedge\neg(\mathit{MatchHead}\vee\mathit{MatchLeft}\vee\mathit{MatchCentre}\vee\mathit{MatchRight}).
	\end{align*}
	\begin{align*}
	  \mathit{Two}\mathit{Correct}=&\mathit{OneIllegal}\wedge\neg\mathit{TwoIllegal}\\
	  &\wedge\neg\big(\bs{\Godelnum{1}}(\overline{\Time_2^i})\wedge\bs{\Godelnum{0}}(\overline{\Tape_2^i})\wedge\Zero_2\big)\\
	  &\wedge\neg\big(\bs{\Godelnum{1}}(\overline{\Time_2^i})\wedge\bs{\Godelnum{1}}(\overline{\Tape_2^i})\wedge\pZero_2\big).
	\end{align*}
	\begin{align*}
	  \mathit{One}&\mathit{Correct}=\neg\mathit{OneIllegal}\wedge\mathit{TwoIllegal}\\
	  &\wedge\neg\big(\bs{\Godelnum{K-1}}(\overline{\Time_1^i})\wedge\bs{\Godelnum{0}}(\overline{\Tape_1^i})\wedge\Zero_1\wedge\State_1^{accept}\big).
	\end{align*}
	\begin{align*}
	  \mathit{None}&\mathit{Correct}=\mathit{OneIllegal}\wedge\mathit{TwoIllegal}.\\
	\end{align*}

      \end{proof}
      
      \section{Future directions}
      
	{\sc Value} is both the more natural and the more interesting algorithmic problem than its
	decision counterpart, and hence deserves investigations.
	
	Difficulties can be anticipated because superpolynomial function classes are not
	well understood. The convenient self-reducibility taken for granted in NP no
	longer applies. See, for example, \cite{Impagliazzo1989}.
	
	Seeing how in the case of circuit games, too, next to nothing is known about the
	complexity of function problems answering this question could lead to a range
	of new results about succinctly represented games.

  \bibliography{references}
  \bibliographystyle{plain}
\end{document}